\documentclass[aps,pra,twocolumn,superscriptaddress,floatfix,
nofootinbib,showpacs,longbibliography]{revtex4-2}

\usepackage[normalem]{ulem}
\usepackage{braket}
\usepackage{xcolor}
\usepackage[utf8]{inputenc}  
\usepackage[T1]{fontenc}     
\usepackage[british]{babel}  
\usepackage[sc,osf]{mathpazo}\linespread{1.05}  
\usepackage[scaled=0.86]{berasans}  
\usepackage[colorlinks=true, citecolor=blue, urlcolor=blue]{hyperref}  
\usepackage{physics}
\usepackage{graphicx} 
\usepackage[babel]{microtype}  
\usepackage{amsmath,amssymb,amsthm,bm,amsfonts,mathrsfs,bbm} 

\usepackage{xspace}  
\definecolor{wrwrwr}{rgb}{0.3803921568627451,0.3803921568627451,0.3803921568627451}
\definecolor{rvwvcq}{rgb}{0.08235294117647059,0.396078431372549,0.7529411764705882}
\usepackage{pgf,tikz,pgfplots}
\usetikzlibrary{calc}
\pgfplotsset{compat=1.15}
\usepackage{mathrsfs}
\usetikzlibrary{arrows,snakes}
\pagestyle{empty}
\usetikzlibrary{positioning}
\usepackage{xcolor}
\usepackage{appendix}
\usepackage{multirow}
\usepackage{array}
\usepackage{bigstrut}
\usepackage{braket}
\usepackage{color}
\usepackage{natbib}
\usepackage{multirow}
\usepackage{float}
\usepackage[caption = false]{subfig}
\usepackage{xcolor,colortbl}
\usepackage{color}

\newtheorem{theorem}{Theorem}

\newtheorem{proposition}{Proposition}


\begin{document}

\title{Genuine Activation of Quantum Nonlocality: Stronger Than Local Indistinguishability}
	
\author{Tathagata Gupta}
\affiliation{Physics and Applied Mathematics Unit, Indian Statistical Institute, 203 B.T. Road, Kolkata 700108, India.}

\author{Subhendu B. Ghosh}
\affiliation{Physics and Applied Mathematics Unit, Indian Statistical Institute, 203 B.T. Road, Kolkata 700108, India.}

\author{Ardra A V}
\affiliation{School of Physics, IISER Thiruvananthapuram, Vithura, Kerala 695551, India.}

\author{Anandamay Das Bhowmik}
\affiliation{Physics and Applied Mathematics Unit, Indian Statistical Institute, 203 B.T. Road, Kolkata 700108, India.}

\author{Sutapa Saha}
\affiliation{Physics and Applied Mathematics Unit, Indian Statistical Institute, 203 B.T. Road, Kolkata 700108, India.}

\author{Tamal Guha}
\affiliation{Department of Computer Science, The University of Hong Kong, Pokfulam Road, Hong Kong.}

\author{Ramij Rahaman}
\affiliation{Physics and Applied Mathematics Unit, Indian Statistical Institute, 203 B.T. Road, Kolkata 700108, India.}

\author{Amit Mukherjee}
\affiliation{S. N. Bose National Centre for Basic Sciences, Block JD, Sector III, Saltlake, Kolkata 700098, India.}
\begin{abstract}
{The structure of quantum theory assures the discrimination of any possible orthogonal set of states. However, the scenario becomes highly nontrivial in the limited measurement setting and leads to different classes of impossibility, viz., indistinguishability, unmarkability, irreducibility etc. These phenomena, often referred to as other nonlocal aspects of quantum theory, have utmost importance in the domain of secret sharing etc. It, therefore, becomes a pertinent question to distill/activate such behaviours from a set, apparently devoid of these nonlocal features and free from local redundancy. While the activation of local indisitnguishability in the sets of entangled states has recently been reported, other stronger notion of quantum nonlocality has yet not been inspected in the parlance of activation. Here, we explored all such stronger versions of nonlocality and affirmatively answered to activate each of them. We also find a locally distinguishable set of multipartite entangled states which can be converted with certainty to a nontrivial set of locally irreducible genuinely entangled states.}
\end{abstract}
\maketitle
\section{Introduction}
Quantum entanglement \cite{Horodecki2009} is a necessary resource to manifest some remarkable non-classical features. Bell nonlocality \cite{Bell64,Bell66,Mermin93RMP,Brunner14} is perhaps the most celebrated among those counter-intuitive phenomena. Apart from its profound foundational perspective Bell nonlocality also acts as an essential resource in a plethora of intriguing applications \cite{nsqkd,btsqkd,random,renner12,testdim,postqbanik,dvdim,mdran,gamepappa,Roy2016,game2}. However, Bell nonlocality is not the sole member of this counter-intuitive non-classical club. Quantum mechanics also empowers a number of other nonlocal phenomena, such as, local indistinguishability \cite{walgate,lid3} and irreducibiity \cite{halderprl}, impossibility of local state marking \cite{lsm} etc. All those tasks show some nonlocal feature different from the one named after J. S. Bell. Here, we will particularly focus on these three kinds of nonlocal task.

It is well known that identifying a multipartite quantum state from an orthogonal set by some spatially separated players is a nontrivial task whenever they have restricted measurement settings. It may not always be possible to accomplish this task even when the players are allowed to perform local operations along with classical communication among each other. This surprising fact is usually termed as local indistinguishability of quantum states. Since its inception, this phenomenon has fueled a vast amount of studies \cite{lid1,lid2,lid3,lid4,lid5,lid6,lid7,lid8,lid9,lid11,lid12,lid13,lid14,lid15,lid16,lid17,lid18,lid19,lid20,lid21,lid22,lid23,lid24,lid25,lid26,lid28,lid29,lid30,lid31,lid32,lid33,lid34,lid35,lid36,lid37,lid38,lid39,halderpra,halderpra2,halderprl,lsm}. Another compelling study in this area is recently introduced by Halder \textit{et al.} - local irreducibility \cite{halderprl}. Their question involves elimination of some state(s) from a set of multipartite states via LOCC while keeping the orthogonality relation among the updated states intact. As an instance, the set of four two-qubit Bell states which is a well known locally indisitnguishable set has also been shown as locally irreducible. Very recently, Sen \textit{et al.} introduced another novel task of correctly marking the members of a set of orthogonal  multipartite states shared between spatially separated players \cite{lsm}. The players are restricted to perform LOCC only. In this setting, it has been shown that the above mentioned set of four Bell states are locally unmarkable.

Emergence of a number of nonlocal features also raises a pertinent question - which are stronger than others? It has been demonstrated that local indisitinguishability of an orthogonal set does not necessarily guarantee that the set must also be locally irreducible or unmarkable. Rather, the converse is proved to be true. Any set of states which is locally irreducible must also be locally indistinguishable \cite{halderprl}. In addition to that it has also been established that local unmarkability readily implies local indistinguishability \cite{lsm}.
Naturally, both the notions of local irreducibility and unmarkability are stronger nonlocal phenomena than local indistinguishability. At this point, it may appear that quantum entanglement is the sole reason behind such interesting nonlocal properties. Surprisingly, there exist sets of product states which are also irreducible under LOCC \cite{bennps,cmp,upb,halderprl}. This striking phenomenon is usually termed as `quantum nonlocality without entanglement'. However, here we are only interested in various nonlocal features manifested in entangled states.

The nonlocal features mentioned above are of immense importance in various quantum security schemes, such as, data hiding \cite{DiViieee,terprl,wernerprl}, secret sharing \cite{markhampra} etc. Thus, it is a question of utmost importance to transform an orthogonal set of non-resource states to a nonlocal set by local orthogonality preserving measurements (OPM) only. Bandyopadhyay and Halder \cite{sar} recently provided some examples in this direction. They introduced a number of locally distinguishable sets which can be deterministically converted to locally indistinguishable ones just by performing some local OPMs. This phenomenon has been called \textit{genuine}\footnote{The authors of \cite{sar} termed this activation `genuine' to ensure that the initial orthogonal set cannot be trivially converted to some nonlocal sets just by reducing some subsystem(s).} \textit{activation of nonlocality}. Their work has already motivated a number of other studies on activation of quantum nonlocality without entanglement \cite{liz,our}.   The work \cite{liz} deals with the activation of locally irreducible set from locally disitinguishable sets of bipartite product states. Some authors of this current article have very recently explored the activation of strongest possible form \cite{halderprl,gnps,minimal} of quantum nonlocality without entanglement in sets of multipartite orthogonal product states \cite{our}. However, none of those previous works delve into the activation of above mentioned stronger nonlocal features in sets of multipartite entangled states.
In this article, we explore this question. Here, we introduce a number of locally distinguishable sets that can be deterministically transformed, via local OPMs, into sets of various stronger nonlocal characteristics. Precisely, we present more than one locally distinguishable sets which can be transformed into locally unmarkable sets with certainty.  Furthermore, we also provide a set of multipartite states which can be deterministically activated to a locally irreducible set of genuinely entangled states with cardinality more than three.

The rest of the paper is structured as follows. In section \ref{sec:lsm}, we delve into the activation of a stronger nonlocal property, namely, local unmarkability. In section \ref{sec:ghz}, we explore the activation of another strong nonlocal feature - local irreducibility in multipartite genuinely entangled states. Lastly, section \ref{sec:dis} contains our concluding remarks and possible open directions in this area.

\section{Activation of local unmarkability}\label{sec:lsm}
In this section, we will present a number of orthogonal sets in support of our assertion. Precisely speaking, we will introduce few examples of locally distinguishable sets, from which it is possible to activate several possible notions of indistinguishability. Before going to the main theorems we propose an orthogonal set of bipartite states, $\mathcal{S}_1\equiv\{\ket{\psi_i}_{AB}\}_{i=1}^4 \subset \mathbb{C}^2\otimes \mathbb{C}^4$, where,
\begin{subequations}\label{saro}
\begin{eqnarray}
\ket{\psi_1}_{AB}&=&\ket{0\mathbf{0}}+\ket{0\mathbf{2}}+\ket{1\mathbf{1}}-\ket{1\mathbf{3}}\\
\ket{\psi_2}_{AB}&=&\ket{0\mathbf{0}}-\ket{0\mathbf{2}}-\ket{1\mathbf{1}}-\ket{1\mathbf{3}}\\
\ket{\psi_3}_{AB}&=&\ket{0\mathbf{1}}-\ket{1\mathbf{2}}-\ket{1\mathbf{0}}-\ket{0\mathbf{3}}\\
\ket{\psi_4}_{AB}&=&\ket{0\mathbf{1}}-\ket{1\mathbf{2}}+\ket{1\mathbf{0}}+\ket{0\mathbf{3}}.
\end{eqnarray}
\end{subequations}
Here, we represent the ququad system of Bob as a composition of two qubits. In our construction, $\ket{\mathbf{0}}_B:=\ket{00}_{b_1b_2},\ket{\mathbf{1}}_B:=\ket{01}_{b_1b_2},\ket{\mathbf{2}}_B:=\ket{10}_{b_1b_2},\ket{\mathbf{3}}_B:=\ket{11}_{b_1b_2}.$
\begin{proposition}\label{4s}
The set $\mathcal{S}_1$ is locally distinguishable and free from local redundancy.
\end{proposition}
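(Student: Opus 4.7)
The plan is to treat the two assertions of Proposition~\ref{4s} separately, both through direct inspection of how each $\ket{\psi_i}$ decomposes along Alice's computational basis.

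For local distinguishability, I would have Alice begin with the projective measurement $\{\ket{0}\bra{0},\ket{1}\bra{1}\}$ and broadcast her outcome. In the $\ket{0}_A$ branch the collapsed states on Bob read, up to normalisation, $\ket{\mathbf{0}}+\ket{\mathbf{2}}$, $\ket{\mathbf{0}}-\ket{\mathbf{2}}$, $\ket{\mathbf{1}}-\ket{\mathbf{3}}$ and $\ket{\mathbf{1}}+\ket{\mathbf{3}}$, which are pairwise orthogonal inside Bob's ququad; in the $\ket{1}_A$ branch the residues come out to $\ket{\mathbf{1}}-\ket{\mathbf{3}}$, $-\ket{\mathbf{1}}-\ket{\mathbf{3}}$, $-\ket{\mathbf{0}}-\ket{\mathbf{2}}$ and $\ket{\mathbf{0}}-\ket{\mathbf{2}}$, again pairwise orthogonal. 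Bob then completes the identification with a projective measurement in the basis of these four orthogonal residues, giving a one-round LOCC protocol that succeeds deterministically.

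For freedom from local redundancy, I would follow the standard convention that $\mathcal{S}_1$ has no local redundancy precisely when, on each party $X$, the support of $\sum_i \rho_X^{(i)}$ is all of $\mathcal{H}_X$---equivalently, there is no strictly smaller local tensor factor that already carries the set. On Alice's side, each $\ket{\psi_i}$ has by inspection equal weight on $\ket{0}_A$ and $\ket{1}_A$, so $\rho_A^{(i)}\propto I_A$ for every $i$ and the sum trivially has full rank on $\mathbb{C}^2$. On Bob's side I would pair the states as $(\psi_1,\psi_2)$ and $(\psi_3,\psi_4)$: within each pair the Bob components appear in $\pm$ combinations whose outer products, when summed, cancel the cross terms and leave a diagonal contribution supported on $\{\ket{\mathbf{0}},\ket{\mathbf{1}},\ket{\mathbf{2}},\ket{\mathbf{3}}\}$. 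Hence $\sum_i \rho_B^{(i)}\propto I_B$ on $\mathbb{C}^4$, and no proper local subspace on either side contains the whole set.

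The only step that requires real care is bookkeeping the signs when summing $\rho_B^{(1)}+\rho_B^{(2)}$ and $\rho_B^{(3)}+\rho_B^{(4)}$, but the structure of $\mathcal{S}_1$ has clearly been engineered so that both claims reduce to these short calculations, and I do not anticipate any deeper obstruction.
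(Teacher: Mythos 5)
Your local-distinguishability argument is correct and is essentially the paper's own protocol: Alice measures $\{P[\ket{0}_A],P[\ket{1}_A]\}$ and Bob measures in the basis $\{\ket{\mathbf{0}}\pm\ket{\mathbf{2}},\ket{\mathbf{1}}\pm\ket{\mathbf{3}}\}$ (your ``four orthogonal residues'' are the same basis in both branches, which is why the paper lets Bob measure regardless of Alice's outcome). That half is fine.

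The second half has a genuine gap: you have proved the wrong statement. In this paper (following Bandyopadhyay and Halder, whose convention the authors explicitly adopt and then only mildly refine), ``free from local redundancy'' means that discarding any local \emph{subsystem} --- Alice's qubit, Bob's first qubit $b_1$, or Bob's second qubit $b_2$, viewed as tensor factors --- destroys the orthogonality of the set, so that no nonlocality could be claimed after a trivial subsystem reduction. Your criterion, that $\sum_i\rho_X^{(i)}$ has full support on each party's Hilbert space, is a different condition and does not imply the required one: for example, the paper's set $\mathcal{S}_2$ has full-rank summed marginals yet remains orthogonal (indeed distinguishable) after discarding $A_2B_2$, which is precisely the kind of situation the redundancy condition is designed to flag. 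Nothing in your computation of $\rho_A^{(i)}$ or of $\rho_B^{(1)}+\rho_B^{(2)}$ says anything about what survives when one of Bob's two qubits is traced out. What is actually needed --- and what the paper does --- is to exhibit, for each discardable factor, a pair of states whose reduced density operators coincide: tracing out $b_1$ collapses both $\ket{\psi_3}$ and $\ket{\psi_4}$ to $\frac{1}{2}(\ketbra{0}{0}_A\otimes\ketbra{1}{1}_{b_2}+\ketbra{1}{1}_A\otimes\ketbra{0}{0}_{b_2})$, tracing out $b_2$ identifies $\ket{\psi_2}$ with $\ket{\psi_3}$, and tracing out Alice's qubit identifies $\ket{\psi_2}$ with $\ket{\psi_4}$. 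Without a computation of this kind your proof of the second assertion does not go through.
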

\begin{proof}

First we will describe the local distinguishability protocol of the set $\mathcal{S}_1$. Suppose, Alice performs a measurement $\mathcal{M}_A\equiv\{{M}_1^A:=P[\ket{0}_A],{M}_2^A:=P[\ket{1}_A]\}$. Here, $P[(\ket{i},\ket{j})_\#]:=(\ket{i}\bra{i}+\ket{j}\bra{j})_\#$, and $\#$ denotes the party. Regardless of Alice's outcome, Bob measures in the basis $\{(\ket{\mathbf{0}}\pm\ket{\mathbf{2}}),(\ket{\mathbf{1}}\pm\ket{\mathbf{3}})\}$. Then, by communicating their results, they can identify the state. 

To check the condition of local redundancy, observe that, discarding the Bob's first qubit reduces both $\ket{\psi_{3}}_{AB}$ and $\ket{\psi_{4}}_{AB}$ to $\frac{1}{2}(\ketbra{0}{0}_{A}\otimes\ketbra{1}{1}_{b_{2}}+\ketbra{1}{1}_{A}\otimes\ketbra{0}{0}_{b_{2}})$ , while discarding the second qubit produces $\frac{1}{2}(\ketbra{0}{0}_{A}\otimes\ketbra{-}{-}_{b_{1}}+\ketbra{1}{1}_{A}\otimes\ketbra{+}{+}_{b_{1}})$ from both $\ket{\psi_{2}}_{AB}$ and $\ket{\psi_{3}}_{AB}$. On the other hand, discarding Alice's qubit reduces both $\ket{\psi_{2}}_{AB}$ and $\ket{\psi_{4}}_{AB}$ to an uniform ensemble of $\ket{+}_{b_{1}}\ket{1}_{b_{2}}$ and $\ket{-}_{b_{1}}\ket{0}_{b_{2}}$ and hence this completes the proof.
\end{proof}
The possible notion of nonlocality activation from the set $\mathcal{S}_{1}$ will be depicted later. Before that, let us consider another set $\mathcal{S}_2\equiv\{\ket{\xi_i}_{AB}\}_{i=1}^4 \subset \mathbb{C}^4\otimes \mathbb{C}^8$, where, \begin{subequations}
\begin{eqnarray}
\ket{\xi_1}_{AB}&=&\ket{\psi_1}_{A_1B_1}\otimes \ket{\phi^+}_{A_2B_2}\\
\ket{\xi_2}_{AB}&=&\ket{\psi_2}_{A_1B_1}\otimes \ket{\phi^-}_{A_2B_2}\\
\ket{\xi_3}_{AB}&=&\ket{\psi_3}_{A_1B_1}\otimes \ket{\phi^-}_{A_2B_2}\\
\ket{\xi_4}_{AB}&=&\ket{\psi_4}_{A_1B_1}\otimes \ket{\phi^-}_{A_2B_2}.
\end{eqnarray}
\end{subequations}
Here $\ket{\phi^\pm}$ denote the Bell states $\frac{1}{\sqrt{2}}(\ket{00}\pm\ket{11}$.\\
To emphasize the trivial nonlocality activation from a local redundant set, Bandopadhyay and Halder have come up with an example by composing a locally indistiguishable set with a distinguishable one, such that the joint states remain distinguishable \cite{sar}. Now, by discarding the distinguishable part, one may claim to activate "nonlocality", which is trivial by the construction. Hence to remove such an ambiguity, the authors in \cite{sar}, claims that a local-redundancy free set should lose its orthogonality after discarding any of its subsystems(s). Being a sufficient condition, it not only excludes all such trivial constructions, but also several possible class of nontrivial structures. Here, we modify the criterion slightly, to address only the (more appropriate) issue of local distinguishability. 
Precisely, we want 
to identify a set of distinguishable states as locally redundant, if it becomes indistinguishable just by simple subsystem(s) reduction. To make our point explicit, consider the set $\mathcal{S}_{2}$ which is orthogonal after a particular subsystem reduction, but remains locally distinguishable. However, the set is a potential candidate to activate nonlocality under properly chosen local OPM. So, nonlocality activation from this set is not a trivial claim.
\begin{proposition}
The set of states $\mathcal{S}_2$ is locally distinguishable and reduction of any of its subsystem(s), either makes it nonorthogonal or, a locally distinguishable set.
\end{proposition}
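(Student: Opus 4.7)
The plan is to first verify local distinguishability of $\mathcal{S}_2$ directly from Proposition \ref{4s}, and then exhaust the possible subsystem reductions, showing each either destroys orthogonality or preserves local distinguishability. Since the index $i$ uniquely determines $\ket{\xi_i}$, Alice and Bob can simply ignore their $A_2,B_2$ qubits and run the protocol of Proposition \ref{4s} on the $A_1B_1$ register to identify $i$. This settles the distinguishability claim.

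For the subsystem-reduction part I would enumerate the natural subsystems $A_1,A_2,B_1,B_2$ (with $B_1$ further decomposable into Bob's qubits $b_1,b_2$ if one wishes to push the check further) and for each compute the global reduced states. Discarding $A_2$ or $B_2$ traces out one side of the Bell pair, turning the second tensor factor into $\mathbb{I}/2$; the resulting set is $\{\ket{\psi_i}\bra{\psi_i}_{A_1B_1}\otimes\mathbb{I}/2\}$, whose distinguishability is equivalent to that of $\mathcal{S}_1$ and hence follows from Proposition \ref{4s}. Discarding $A_1$ invokes the observation from the proof of Proposition \ref{4s} that $\ket{\psi_2}$ and $\ket{\psi_4}$ yield the same reduction on Bob's side; combined with the fact that $\ket{\xi_2},\ket{\xi_4}$ share the same Bell factor $\ket{\phi^-}$, the reduced global states coincide and the set becomes non-orthogonal.

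For the discard-$B_1$ case I would compute the Alice-side reduction of each $\ket{\psi_i}$: each is a maximally mixed qubit state $\mathbb{I}/2$ because the $\ket{0}_A$ and $\ket{1}_A$ branches of every $\ket{\psi_i}$ are supported on orthogonal Bob vectors of equal norm. Thus all four reductions collapse to $\mathbb{I}/2\otimes\ket{\phi^\pm}\bra{\phi^\pm}$, and since three of the four states share $\ket{\phi^-}$, they become identical and orthogonality is lost. For completeness I would also record the finer reductions: discarding $b_1$ identifies $\ket{\xi_3}$ with $\ket{\xi_4}$, and discarding $b_2$ identifies $\ket{\xi_2}$ with $\ket{\xi_3}$, both using the corresponding equalities already established in the proof of Proposition \ref{4s}.

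The main obstacle is not any single difficult computation but rather deciding what the relevant set of ``subsystems'' is and making the case analysis exhaustive without ambiguity; once the decomposition $A=A_1A_2$, $B=B_1B_2$ is fixed, the reductions of the Bell factor are immediate and the reductions on $A_1B_1$ have already been tabulated in Proposition \ref{4s}, so each case reduces to either citing that proposition or observing a coincidence between two $\ket{\psi_i}$ reductions combined with a coincidence of Bell factors. This modular structure keeps the proof short and conceptually transparent.
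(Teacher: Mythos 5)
Your proposal is correct and follows essentially the same route as the paper: local distinguishability is inherited from the $A_1B_1$ protocol of Proposition~\ref{4s}, and the subsystem reductions are handled case by case using the coincidences of reduced states already tabulated there together with the shared $\ket{\phi^-}$ factor. If anything, you are slightly more explicit than the paper (e.g., spelling out the discard-$B_1$ and discard-$b_1,b_2$ cases, which the paper dispatches by a blanket reference to Proposition~\ref{4s}), but the underlying argument is identical.
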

\begin{proof}
The protocol for locally distinguishing the set is quite straightforward. Alice and Bob need to follow the same scheme elucidated in Proposition \ref{4s} for the bipartite subsystems $A_1B_1$. This scheme perfectly distinguishes the set $\mathcal{S}_2$ via LOCC.

We will next move to the proof that $\mathcal{S}_2$ does not have local redundancy. 
We will prove this case by case.\\
\textbf{Case I:} First consider that both Alice and Bob discard their subsystems $A_1B_1$. Consequently, the four resulting states $\{\ket{\zeta`_1}:=\ket{\phi^+}_{A_2B_2},\ket{\zeta`_2}:=\ket{\phi^-}_{A_2B_2},\ket{\zeta`_3}:=\ket{\phi^-}_{A_2B_2},\ket{\zeta`_4}:=\ket{\phi^-}_{A_2B_2}\}$ are not mutually orthogonal.

Further, it follows from the proof of Proposition \ref{4s} that discarding any subsystem(s) from $A_1B_1$ the remaining reduced sets in ${\mathbb{C}^2}\otimes {\mathbb{C}^8}$ or ${\mathbb{C}^4}\otimes{\mathbb{C}^4}$ are also nonorthogonal. 
\\
\textbf{Case II:} Let us now consider that Alice and Bob discard the subsystems $A_2B_2$. In this case, it is evident that the reduced states $\{\ket{\psi_i}_{A_1B_1}\}_{i=1}^4$
are orthogonal and locally distinguishable (see Proposition \ref{4s}). So obviously, this operation does not make the set nonlocal.

Similarly, it is also evident that whenever any of the players discards $A_2$ or $B_2$ the set of reduced states remains locally distinguishable in $A_1B_1$. 

This completes our proof. 
\end{proof}
The orthogonal set $\mathcal{S}_2$ must be locally markable as it is perfectly distinguishable under LOCC. We are, therefore, now in a position to state our first result that the set $\mathcal{S}_2$ can be activated to a nonlocal, i.e., locally indistinguishable set, which is, however, locally markable. Thus, this is an example of weak activation.  

\begin{theorem}
The set $\mathcal{S}_2$ can be deterministically converted to a locally indistinguishable set of states via a local OPM although the final set of states are still locally markable.  
\end{theorem}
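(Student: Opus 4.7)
The plan is to exhibit a concrete local OPM acting on $\mathcal{S}_{2}$ that, with certainty, produces an orthogonal four-state set which is locally indistinguishable yet still locally markable. Because local unitaries preserve both the LID and LM status, the OPM must be genuinely non-unitary; ``deterministic'' here has to mean that every branch of the measurement yields such a set (possibly up to locally known corrections). A warning-sign I would keep in mind throughout the design is that over-shooting the construction so that the final set reduces to the four Bell states would be LID \emph{and} locally unmarkable, contradicting the claim.

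First I would design the OPM. The product structure $\ket{\xi_i}=\ket{\psi_i}_{A_{1}B_{1}}\otimes\ket{\phi^{\pm}}_{A_{2}B_{2}}$ separates the ``discriminating'' information among two sectors: the index $i$ inside $\mathcal{S}_{1}$, and the Bell label distinguishing $\xi_{1}$ from $\xi_{2,3,4}$. The strategy is to correlate these two sectors so that neither one alone is enough for LOCC identification. Natural candidates are (i)~a parity-type projective measurement that Alice performs on her joint system $A_{1}A_{2}$ after an intra-party CNOT, or (ii)~Bob applying a CNOT between $B_{2}$ and one qubit of $B_{1}$ and then projectively measuring that qubit. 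For each candidate I would check, branch by branch, (a)~that orthogonality is preserved, (b)~that Alice's and Bob's reduced density matrices of the resulting four states become sufficiently symmetric that no useful LOCC step is available, while (c)~the pairwise LOCC distinguishability of the four output states is still intact.

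To establish LID of the post-measurement set I would follow the standard argument: without loss of generality Alice initiates the distinguishing protocol, so one writes her general POVM element $E$, imposes that the four transformed states $(\sqrt{E}\otimes\mathbb{1})\ket{\chi_i}$ remain mutually orthogonal, and shows this constrains $E$ to be proportional to the identity on Alice's side (so no LOCC progress is possible); the same argument is then repeated for Bob. Local markability, on the other hand, would follow from an explicit marking protocol using four copies of the final set: since each pair of the four post-OPM states is pairwise LOCC-distinguishable, Alice and Bob can pin down the permutation assigning states to positions by executing a carefully ordered sequence of pairwise distinguishing measurements, each of which is an OPM for the remaining candidate states.

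The main obstacle is the LID proof, because LOCC is a wide class of operations and subtle distinguishing strategies are easy to miss; this is where the algebraic bookkeeping will be heaviest. A secondary obstacle is the balancing act already flagged above: the OPM must erase exactly enough structure to defeat LOCC distinguishability of the whole set while preserving enough to permit pairwise distinguishability (hence marking). Getting both sides of this balance simultaneously is what forces the particular choice of OPM, and I expect some trial and error among candidates (i) and (ii) before a working one is identified.
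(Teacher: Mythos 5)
Your proposal is a strategy sketch rather than a proof, and it has two concrete gaps. First, you never commit to an OPM: you list two candidate constructions (intra-party CNOT plus parity measurement on $A_1A_2$, or a CNOT from $B_2$ into $B_1$ followed by a projection) and explicitly defer the choice to ``trial and error.'' Without a fixed measurement and the resulting four states written down, neither the orthogonality check nor the indistinguishability argument can actually be carried out. The paper's route is far simpler than what you are gearing up for: Bob projects $B_1$ alone with $\{P[(\ket{\mathbf{0}},\ket{\mathbf{1}})_{B_1}],\,P[(\ket{\mathbf{2}},\ket{\mathbf{3}})_{B_1}]\}$, which in every branch (after a local phase correction) collapses each $\ket{\psi_i}_{A_1B_1}$ to a two-qubit Bell state, leaving the set $\{\ket{\phi^+}\ket{\phi^+},\ket{\phi^-}\ket{\phi^-},\ket{\psi^+}\ket{\phi^-},\ket{\psi^-}\ket{\phi^-}\}$ up to local unitaries. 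No fresh LID proof is needed because this is a known locally indistinguishable set from the literature, and its local markability is likewise a cited result of Sen \emph{et al.} Your instinct that the OPM must not ``over-shoot'' to the four Bell states is sound, but note the paper avoids this automatically because the $A_2B_2$ Bell labels ($+,-,-,-$) are fixed by the construction of $\mathcal{S}_2$ and are untouched by the OPM.

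Second, and more seriously, your markability argument rests on a false principle. You propose to derive local markability from the fact that the four output states are \emph{pairwise} LOCC-distinguishable. But by Walgate \emph{et al.}, \emph{any} two orthogonal multipartite pure states are perfectly LOCC-distinguishable, so pairwise distinguishability holds for every orthogonal set --- including the four Bell states, which are locally \emph{unmarkable}. Hence ``a carefully ordered sequence of pairwise distinguishing measurements'' cannot in general establish markability: the measurements disturb the copies, and correctly assigning all labels simultaneously is strictly harder than distinguishing pairs. If you want a self-contained markability proof you must exhibit an explicit marking protocol for the specific final set (exploiting, e.g., that the $A_2B_2$ factor of the first state differs from the other three and can be tested nondestructively enough, together with the $A_1B_1$ Bell structure), or else do what the paper does and invoke the known markability of this set.
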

\begin{proof}
Suppose, Bob performs a local OPM on subsystem $B_1$,  $\mathcal{N}_{B_1}\equiv\{{N}_1^{B_1}:=P[(\ket{\mathbf{0}},\ket{\mathbf{1}})_{B_1}],{N}_2^{B_1}:=P[(\ket{\mathbf{2}},\ket{\mathbf{3}})_{B_1}]\}$. Clearly, when ${N}_1^{B_1}$ clicks, the players are left with any of the following four orthogonal states in ${\mathbb{C}^2}^{\otimes2}\otimes{\mathbb{C}^2}^{\otimes2}$:
\begin{subequations}\label{duan}
\begin{eqnarray}
(\ket{0\mathbf{p}}+\ket{1\mathbf{q}})_{A_1B_1}&\otimes& \ket{\phi^+}_{A_2B_2}\\
(\ket{0\mathbf{p}}-\ket{1\mathbf{q}})_{A_1B_1}&\otimes& \ket{\phi^-}_{A_2B_2}\\
(\ket{0\mathbf{q}}+\ket{1\mathbf{p}})_{A_1B_1}&\otimes& \ket{\phi^-}_{A_2B_2}\\
(\ket{0\mathbf{q}}-\ket{1\mathbf{p}})_{A_1B_1}&\otimes& \ket{\phi^-}_{A_2B_2}
\end{eqnarray}
\end{subequations}
where, $\mathbf{p}=\mathbf{0}$ and $\mathbf{q}=\mathbf{1}$. Otherwise, when ${N}_2^{B_1}$ clicks, the system belongs to ${\mathbb{C}^2}^{\otimes2}\otimes{\mathbb{C}^2}^{\otimes2}$. Bob performs a phase flip operation on the qubit system $B_1$. Consequently, the state that they are left with is one of four orthogonal states in (\ref{duan}) with $\mathbf{p}=\mathbf{2}$ and $\mathbf{q}=\mathbf{3}$. Furthermore, the orthogonal set in (\ref{duan}) is known to be locally indistinguishable \cite{lid18}. However, those states are perfectly markable under LOCC \cite{lsm}. It, therefore, shows an activation of weaker nonlocality.
\end{proof}
However, an example considered in \cite{sar} can be used to further activate a relatively stronger notion of nonlocality, which we will elucidate in the following theorem. 

\begin{theorem}
A locally distinguishable (hence, locally markable) set $\mathcal{S}_3$ that consists of any three states of (\ref{saro}) can be converted, with certainty, via local OPM to a locally indistinguishable set which is also unmarkable under one-way LOCC.
\end{theorem}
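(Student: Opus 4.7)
The plan proceeds in three stages. First, local distinguishability (and hence local markability) of $\mathcal{S}_{3}$ follows directly from the protocol of Proposition~\ref{4s} applied to the chosen three states.

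Second, for the activating OPM, I would let Bob perform $\mathcal{N}_{B}\equiv\{P[(\ket{\mathbf{0}},\ket{\mathbf{1}})_{B}],\,P[(\ket{\mathbf{2}},\ket{\mathbf{3}})_{B}]\}$, equivalent to a computational measurement on his qubit $b_{1}$. Substituting the ququad encoding $\ket{\mathbf{k}}=\ket{k_{1}k_{2}}_{b_{1}b_{2}}$ into (\ref{saro}), a direct calculation shows that the first projector collapses $(\ket{\psi_{1}},\ket{\psi_{2}},\ket{\psi_{3}},\ket{\psi_{4}})$ on $(A,b_{2})$ to $(\ket{\phi^{+}},\ket{\phi^{-}},\ket{\psi^{-}},\ket{\psi^{+}})$ respectively, while the second projector sends them to $(\ket{\phi^{-}},-\ket{\phi^{+}},-\ket{\psi^{+}},\ket{\psi^{-}})$, a relabelling Bob removes by a Pauli $Z$ on $b_{2}$ (up to inessential global signs), with the outcome classically available to him at no additional LOCC cost. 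Hence, regardless of the branch, any three members of $\mathcal{S}_{3}$ are deterministically converted into three of the four Bell states shared between Alice and $b_{2}$.

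Third, I would argue that any three Bell states form a locally indistinguishable set that is also unmarkable under one-way LOCC. Local indistinguishability is the classical Ghosh--Kar--Roy--Sen result~\cite{lid1}. For one-way LOCC unmarkability, consider the marking task where three labelled $AB$ copies are prepared in $\bigotimes_{i=1}^{3}\ket{\chi_{\pi(i)}}_{A_{i}B_{i}}$ for a hidden permutation $\pi\in S_{3}$. Writing $\ket{\chi_{j}}=(I\otimes U_{j})\ket{\phi^{+}}$ with Pauli $U_{j}$ and using the transpose identity $(K_{a}\otimes I)\ket{\phi^{+}}^{\otimes 3}=(I\otimes K_{a}^{T})\ket{\phi^{+}}^{\otimes 3}$ for Alice's joint Kraus operator $K_{a}$, Bob's conditional (unnormalised) state becomes $V_{\pi}M_{a}V_{\pi}^{\dagger}$ with $V_{\pi}=\bigotimes_{i}U_{\chi_{\pi(i)}}$ and $M_{a}=K_{a}^{T}K_{a}^{*}\succeq 0$. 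Since Alice's marginal on every Bell state is maximally mixed, her outcome probabilities are $\pi$-independent, and perfect marking demands that the six operators $\{V_{\pi}M_{a}V_{\pi}^{\dagger}\}_{\pi\in S_{3}}$ have pairwise disjoint supports on $\mathbb{C}^{8}$ for every non-null outcome $a$. Unitarity of $V_{\pi}$ combined with the support-dimension bound $6\cdot\operatorname{rank}(M_{a})\leq 8$ then forces $M_{a}=\ket{\alpha}\bra{\alpha}$ rank one, reducing the question to whether a unit vector $\ket{\alpha}\in\mathbb{C}^{8}$ can satisfy all fifteen orthogonality constraints $\langle\alpha|V_{\pi}^{\dagger}V_{\pi'}|\alpha\rangle=0$.

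The hard part will be this non-existence claim. My plan is a finite Pauli enumeration: for the representative triple $\{\ket{\phi^{+}},\ket{\phi^{-}},\ket{\psi^{+}}\}$ (all other triples being equivalent under local Pauli relabelling), each of the fifteen $V_{\pi}^{\dagger}V_{\pi'}$ is a genuine three-qubit Pauli, and the constraints split cleanly into nine ``weight-two'' conditions $\langle X_{i}X_{j}\rangle=\langle Y_{i}Y_{j}\rangle=\langle Z_{i}Z_{j}\rangle=0$ at every pair of qubits and six ``weight-three'' conditions $\langle P_{1}\otimes P_{2}\otimes P_{3}\rangle=0$ over the six orderings of $(X,Y,Z)$. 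I expect a short case analysis exploiting the rank-$\leq 2$ structure of two-qubit marginals of any pure three-qubit state (via $\operatorname{rank}\rho_{ij}=\operatorname{rank}\rho_{k}$) to rule out simultaneous annihilation of all fifteen expectations, closing the proof; the Pauli bookkeeping will be the only delicate part.
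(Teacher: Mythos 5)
Your first two stages are correct and consistent with the paper. The paper handles them by citing \cite{sar} for the fact that $\mathcal{S}_3$ is locally distinguishable, redundancy-free, and deterministically convertible by a local OPM into three two-qubit Bell states (then invoking \cite{walgate} for indistinguishability); you instead carry out the OPM explicitly, and your branch-by-branch collapse of (\ref{saro}) onto $(\ket{\phi^{\pm}},\ket{\psi^{\mp}})$ on the $A b_2$ pair checks out. The gap is entirely in stage three. For the one-way unmarkability of three Bell states the paper does nothing beyond citing Sen \emph{et al.} \cite{lsm}; you attempt a self-contained proof and stop at exactly the decisive step. Your reduction is sound as far as it goes: Alice's outcome statistics are permutation-independent because her marginals are maximally mixed, Bob's six conditional states share the rank of $M_a$, so $6\,\mathrm{rank}(M_a)\le 8$ forces rank one, and perfect marking becomes the existence of a unit vector $\ket{\alpha}\in\mathbb{C}^8$ satisfying fifteen Pauli orthogonality constraints. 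But the non-existence of such an $\ket{\alpha}$ \emph{is} the unmarkability claim, and you only announce a plan for it (``I expect a short case analysis\ldots''). A proof that defers its only substantive lemma is not a proof.

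This omission is not cosmetic. The constraint system admits near-misses in natural families: for instance $\frac{1}{2}(\ket{001}+\ket{010}+\ket{100}-\ket{111})$ already satisfies six of the nine weight-two conditions (all $\langle Z_iZ_j\rangle$ and all $\langle X_iX_j\rangle$ vanish) and fails only at $\langle Y_iY_j\rangle$; a parameter count (fifteen real constraints on the fourteen real parameters of a pure state in $\mathbb{C}^8$) is inconclusive; and the rank observation $\mathrm{rank}\,\rho_{ij}=\mathrm{rank}\,\rho_k\le 2$ that you propose to lean on does not by itself kill the diagonal correlations. So the ``Pauli bookkeeping'' is the theorem, not a detail. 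You should also verify that the marking variant you formalized (three labelled registers, hidden permutation in $S_3$) is the one for which \cite{lsm} establishes one-way unmarkability --- that reference distinguishes $n$-LSM for different $n$, and the paper is deliberately careful to claim only \emph{one-way} unmarkability here precisely because the general-LOCC status of three Bell states is open. If you are permitted to import the lemma from \cite{lsm}, as the paper does, your argument closes; as written, it does not.
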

\begin{proof}
In \cite{sar}, the authors proved that the set $S_3$ is locally distinguishable and free from local redundancy. Moreover, under local OPM these set can be deterministically transformed to an orthogonal set consists of three two-qubit Bell states. Naturally, this set is locally indistinguishable \cite{walgate}. Furthermore, Sen \textit{et al.} \cite{lsm} proved that the set $\mathcal{S}_3$ is also an unmarkable one when the players are allowed to perform only one-way LOCC. However, their result does not claim its unmarkability under general LOCC. 
\end{proof}
It is, therefore, a natural question whether there exists any orthogonal set from which strongest form of quantum nonlocality in this direction can be activated. We answer this in affirmation.  
\begin{theorem}\label{th:b1}
The set $\mathcal{S}_1$ can be deterministically converted to a locally unmarkable set of states via a local OPM.  
\end{theorem}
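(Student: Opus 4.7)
The plan is to exhibit a local OPM on Bob's side that deterministically transforms $\mathcal{S}_1$ into the complete Bell basis on the remaining qubits, and then invoke the result of Sen \textit{et al.}~\cite{lsm} that the four two-qubit Bell states are locally unmarkable under general LOCC. Since the activation will succeed in every outcome of the OPM, no probabilistic loss occurs and the post-measurement set inherits unmarkability directly.

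First I would rewrite $\mathcal{S}_1$ to expose the two-qubit structure of Bob's ququad $B=b_1b_2$. Regrouping the terms in (\ref{saro}) one finds that every state has the form $\ket{\psi_i}\propto\ket{0}_A\ket{s_i}_{b_1}\ket{x_i}_{b_2}\pm\ket{1}_A\ket{\bar{s}_i}_{b_1}\ket{\bar{x}_i}_{b_2}$ with $s_i\in\{+,-\}$, $\bar{s}_i$ its Hadamard flip, $x_i\in\{0,1\}$, and $\bar{x}_i=1-x_i$. This structure suggests the OPM $\mathcal{M}_{b_1}\equiv\{\ketbra{0}{0}_{b_1},\ketbra{1}{1}_{b_1}\}$: expanding $\ket{\pm}_{b_1}=(\ket{0}\pm\ket{1})_{b_1}/\sqrt{2}$ and sorting by the value of $b_1$ collapses each $\ket{\psi_i}$ to one of the four Bell states on $Ab_2$, up to a global sign.

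A short case analysis then confirms that in either outcome the four post-measurement states exhaust the Bell basis on $Ab_2$: the $x_i=0$ states ($\ket{\psi_1},\ket{\psi_2}$) land in the $\ket{\phi^\pm}$ pair while the $x_i=1$ states ($\ket{\psi_3},\ket{\psi_4}$) land in the $\ket{\psi^\pm}$ pair, the relative sign inside each pair being fixed by the $\pm$ in the definition of $\ket{\psi_i}$. Orthogonality preservation is immediate from this tabulation, and the bijection with the full Bell basis holds in both branches, so the activation is deterministic. An appeal to \cite{lsm} then upgrades local indistinguishability to local unmarkability under general LOCC.

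The main obstacle is conceptual rather than technical: one has to recognise $b_1$ as a hidden ``Bell-basis controller'' whose computational-basis readout unlocks the Bell structure of $\mathcal{S}_1$ on $A\otimes b_2$. Once this viewpoint is adopted the remainder is routine bookkeeping and a direct invocation of the known unmarkability of the Bell quartet; no genuinely hard step is anticipated.
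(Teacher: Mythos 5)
Your proof is correct and is essentially identical to the paper's: your measurement $\{\ketbra{0}{0}_{b_1},\ketbra{1}{1}_{b_1}\}$ is exactly the paper's OPM $\mathcal{N}_B\equiv\{P[(\ket{\mathbf{0}},\ket{\mathbf{1}})_B],P[(\ket{\mathbf{2}},\ket{\mathbf{3}})_B]\}$, since $\ket{\mathbf{0}},\ket{\mathbf{1}}$ span the $b_1=0$ subspace and $\ket{\mathbf{2}},\ket{\mathbf{3}}$ the $b_1=1$ subspace, and both arguments conclude by collapsing each branch to the full Bell basis on $Ab_2$ and citing the unmarkability of the four Bell states from Sen \emph{et al.}
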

\begin{proof}
Suppose, any of the states from set $\mathcal{S}_1$ is shared between Alice and Bob. Bob then performs a local binary measurement on subsystem $B$, $\mathcal{N}_B\equiv\{{N}_1^B:=P[(\ket{\mathbf{0}},\ket{\mathbf{1}})_B],{N}_2^B:=P[(\ket{\mathbf{2}},\ket{\mathbf{3}})_B]\}$. Whichever projector clicks, they are left with any of four updated states: $\{\ket{0\mathbf{0}}\pm\ket{1\mathbf{1}}, \ket{0\mathbf{1}}\pm\ket{1\mathbf{0}}\}$ or $\{\ket{0\mathbf{2}}\pm\ket{1\mathbf{3}}, \ket{1\mathbf{2}}\pm\ket{0\mathbf{3}}\}$. Both are sets of four maximally entangled states in $\mathbb{C}^2\otimes \mathbb{C}^2$, which is known to be locally unmarkable \cite{lsm}.  
\end{proof}
As the orthogonal set $\mathcal{S}_1$ is shown to be a locally distinguishable one and free from local redundancy (see Proposition \ref{saro}), this is clearly an example of activating strong nonlocality, that is, local indistinguishablity along with local unmarkability. Next, we provide another example of such an extreme activation. We consider an orthogonal set $\mathcal{S}_4:=\{\ket{\xi_i}\}_{i=1}^8$ of eight bipartite states in $\mathbb{C}^4\otimes \mathbb{C}^4$, where, 
\begin{subequations}
\begin{eqnarray}
\ket{\xi_1}_{AB}&=&\ket{\mathbf{0}\mathbf{0}}+\ket{\mathbf{0}\mathbf{2}}+\ket{\mathbf{3}\mathbf{1}}-\ket{\mathbf{3}\mathbf{3}}\\
\ket{\xi_2}_{AB}&=&\ket{\mathbf{0}\mathbf{0}}-\ket{\mathbf{0}\mathbf{2}}-\ket{\mathbf{3}\mathbf{1}}-\ket{\mathbf{3}\mathbf{3}}\\
\ket{\xi_3}_{AB}&=&\ket{\mathbf{0}\mathbf{1}}-\ket{\mathbf{3}\mathbf{2}}-\ket{\mathbf{3}\mathbf{0}}-\ket{\mathbf{0}\mathbf{3}}\\
\ket{\xi_4}_{AB}&=&\ket{\mathbf{0}\mathbf{1}}+\ket{\mathbf{3}\mathbf{2}}+\ket{\mathbf{3}\mathbf{0}}-\ket{\mathbf{0}\mathbf{3}}\\
\ket{\xi_5}_{AB}&=&\ket{\mathbf{1}\mathbf{0}}+\ket{\mathbf{1}\mathbf{2}}+\ket{\mathbf{2}\mathbf{1}}-\ket{\mathbf{2}\mathbf{3}}\\
\ket{\xi_6}_{AB}&=&\ket{\mathbf{1}\mathbf{0}}+\ket{\mathbf{1}\mathbf{2}}-\ket{\mathbf{2}\mathbf{1}}+\ket{\mathbf{2}\mathbf{3}}\\
\ket{\xi_7}_{AB}&=&\ket{\mathbf{1}\mathbf{1}}-\ket{\mathbf{2}\mathbf{2}}+\ket{\mathbf{2}\mathbf{0}}+\ket{\mathbf{1}\mathbf{3}}\\
\ket{\xi_8}_{AB}&=&\ket{\mathbf{1}\mathbf{1}}-\ket{\mathbf{2}\mathbf{2}}-\ket{\mathbf{2}\mathbf{0}}-\ket{\mathbf{1}\mathbf{3}}
\end{eqnarray}
\end{subequations}
Here, the bases of each subsystem is represented as a composite qubit-qubit system. More precisely, $\ket{\mathbf{0}}:=\ket{00},\ket{\mathbf{1}}:=\ket{01},\ket{\mathbf{2}}:=\ket{10},\ket{\mathbf{3}}:=\ket{11}$. 

\begin{proposition}
The set $\mathcal{S}_4:=\{\ket{\psi_i}\}_{i=1}^8$ is locally distinguishable and free from local redundancy.
\end{proposition}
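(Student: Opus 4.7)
The plan is to emulate the strategy of Proposition~\ref{4s}, exploiting the observation that $\mathcal{S}_4$ splits naturally into two $\mathcal{S}_1$-like quartets indexed by the parity of Alice's two-qubit state. The Alice marginals of $\ket{\xi_1},\ldots,\ket{\xi_4}$ are supported on the parity-even subspace spanned by $\ket{\mathbf{0}}=\ket{00}$ and $\ket{\mathbf{3}}=\ket{11}$, while those of $\ket{\xi_5},\ldots,\ket{\xi_8}$ live on the parity-odd subspace spanned by $\ket{\mathbf{1}}=\ket{01}$ and $\ket{\mathbf{2}}=\ket{10}$. Hence the distinguishing protocol starts with Alice performing the rank-two OPM $\mathcal{M}_A \equiv \{P[(\ket{\mathbf{0}},\ket{\mathbf{3}})_A], P[(\ket{\mathbf{1}},\ket{\mathbf{2}})_A]\}$, which pins down one of the two quartets without disturbing orthogonality. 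After relabelling Alice's effective qubit ($\mathbf{0}\leftrightarrow 0,\mathbf{3}\leftrightarrow 1$ in the even branch, $\mathbf{1}\leftrightarrow 0,\mathbf{2}\leftrightarrow 1$ in the odd), the four remaining states share the bipartite skeleton of $\mathcal{S}_1$, so the Alice-$\{\ket{0},\ket{1}\}$ / Bob-$\{(\ket{\mathbf{0}}\pm\ket{\mathbf{2}}),(\ket{\mathbf{1}}\pm\ket{\mathbf{3}})\}$ protocol from Proposition~\ref{4s} handles most of the identification; a final adaptive round in which Alice measures her two physical qubits jointly (beyond the effective logical qubit) is used to disambiguate the Bell-like sub-pair inside each quartet.

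For the absence of local redundancy, I would examine each of the four single-qubit discards $a_1,a_2,b_1,b_2$; any multi-qubit discard automatically inherits a single-qubit collision. The guiding identity is that writing $\ket{\xi_i} = \ket{0}_{a_1}\ket{u_i}_{a_2 B} + \ket{1}_{a_1}\ket{v_i}_{a_2 B}$ and tracing out $a_1$ yields $\ket{u_i}\bra{u_i}+\ket{v_i}\bra{v_i}$, erasing any global sign on the branches. This immediately makes $\ket{\xi_5}$ and $\ket{\xi_6}$ coincide after discarding $a_1$ (they differ only in the sign of the $\ket{1}_{a_1}$-branch), and an analogous regrouping produces the same collision for $\ket{\xi_3},\ket{\xi_4}$. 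Discarding $a_2$ exchanges the parity branches: reorganising in the $a_2$ basis shows that $\ket{\xi_1}$ and $\ket{\xi_5}$ share identical Bob-side blocks tagged by identical $a_1$ flags, so their reduced states on $a_1 B$ agree. Discards of $b_1$ or $b_2$ lift the Bob-side collisions of Proposition~\ref{4s} (the pairs $\{\ket{\psi_3},\ket{\psi_4}\}$ and $\{\ket{\psi_2},\ket{\psi_3}\}$) branch-by-branch into $\mathcal{S}_4$, since Bob's internal structure within each parity sector replicates that of $\mathcal{S}_1$ up to sign.

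The main technical subtlety lies in the distinguishability step. Within each parity branch there is a triple of Bell-type states --- $\ket{\xi_1},\ket{\xi_3},\ket{\xi_4}$ in the even branch, all supported on the $\mathbb{C}^2\otimes\mathbb{C}^2$ subspace $\mathrm{span}\{\ket{\mathbf{0}},\ket{\mathbf{3}}\}\otimes\mathrm{span}\{\tfrac{1}{\sqrt 2}(\ket{\mathbf{0}}+\ket{\mathbf{2}}),\tfrac{1}{\sqrt 2}(\ket{\mathbf{1}}-\ket{\mathbf{3}})\}$ as three maximally entangled vectors, and the analogous triple $\ket{\xi_5},\ket{\xi_6},\ket{\xi_8}$ in the odd branch --- which a naive one-round translation of the Proposition~\ref{4s} measurement does not fully separate (e.g.\ the classical record produced for $\ket{\xi_3}$ coincides with that for $\ket{\xi_4}$). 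The design task is therefore to refine the second round by letting Alice perform a joint two-qubit measurement on $a_1 a_2$ that reads both the logical bit and a relative-phase bit, and to arrange Bob's follow-up measurement, conditioned on Alice's enlarged announcement, so that every one of the eight states produces a unique classical record.
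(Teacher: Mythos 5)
Your first round (Alice's projection onto $\mathrm{span}\{\ket{\mathbf{0}},\ket{\mathbf{3}}\}$ versus $\mathrm{span}\{\ket{\mathbf{1}},\ket{\mathbf{2}}\}$) is exactly the paper's opening move, and your local-redundancy analysis is sound and essentially equivalent to the paper's; the paper is slightly tidier in that it handles every possible discard with the single pair $\{\ket{\xi_3},\ket{\xi_4}\}$, whose reduced states coincide no matter which one of the four qubits is traced out, whereas you invoke a different colliding pair for each discard.

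The gap is in the distinguishability half, and it is not merely a missing detail: the ``refined second round'' and ``final adaptive round'' you defer to cannot exist for the states as printed. You correctly observe that, with the printed signs, the even branch contains $\ket{\xi_1},\ket{\xi_3},\ket{\xi_4}$ as three mutually orthogonal maximally entangled vectors whose local supports are the two-dimensional spaces $\mathrm{span}\{\ket{\mathbf{0}},\ket{\mathbf{3}}\}_A$ and $\mathrm{span}\{\ket{\mathbf{0}}+\ket{\mathbf{2}},\ket{\mathbf{1}}-\ket{\mathbf{3}}\}_B$ (with $\ket{\xi_5},\ket{\xi_6},\ket{\xi_8}$ playing the same role in the odd branch); under the obvious identification these are three of the four Bell states. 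It is a standard result that three mutually orthogonal maximally entangled states in $\mathbb{C}^2\otimes\mathbb{C}^2$ admit no perfect LOCC discrimination, and the ambient $\mathbb{C}^4\otimes\mathbb{C}^4$ cannot help: each party can isometrically embed a qubit into its ququad as a free local operation, so any LOCC protocol separating $\{\ket{\xi_1},\ket{\xi_3},\ket{\xi_4}\}$ would separate three Bell states. Hence no enlargement of Alice's announcement or adaptive refinement of Bob's measurement rescues your protocol, and a fortiori the eight printed states are \emph{not} locally distinguishable. The only reading consistent with the paper's one-line argument --- that each quartet is in one-to-one correspondence with $\mathcal{S}_1$, which places exactly two Bell-type states in each $2\otimes 2$ block --- is that the signs of $\ket{\xi_4}$ and $\ket{\xi_6}$ are typographical: with $\ket{\xi_4}=\ket{\mathbf{0}\mathbf{1}}-\ket{\mathbf{3}\mathbf{2}}+\ket{\mathbf{3}\mathbf{0}}+\ket{\mathbf{0}\mathbf{3}}$ and $\ket{\xi_6}=\ket{\mathbf{1}\mathbf{0}}-\ket{\mathbf{1}\mathbf{2}}-\ket{\mathbf{2}\mathbf{1}}-\ket{\mathbf{2}\mathbf{3}}$ each branch becomes an exact copy of $\mathcal{S}_1$ after your relabelling, and the Proposition~\ref{4s} protocol closes the argument with no extra round. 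You should either adopt the corrected states or conclude indistinguishability; promising an adaptive fix for the set as written is the flaw.
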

\begin{proof}
We start by describing the local distinguishability protocol.
First, Alice performs a measurement $\mathcal{R}_A\equiv\{{R}_1^A:=P[(\ket{\mathbf{0}},\ket{\mathbf{3}})_A],{R}_2^A:=P[(\ket{\mathbf{1}},\ket{\mathbf{2}})_A]\}$. The post measurement states are $\{\ket{\xi_i}\}_{i=1}^4$ or, $\{\ket{\xi_i}\}_{i=5}^8$ for the first and second outcome respectively. However, it is evident that those two sets of qubit-ququad states are in one to one correspondence to the set $\mathcal{S}_1$. The rest of the distinguishability protocol is thus a straightforward extension of the protocol provided in Proposition \ref{saro}.   

It is also straightforward to show that the set does not have local redundancy. 
If both of the parties discard one of their subsystems they will remain with eight states in $\mathbb{C}^{2}\otimes\mathbb{C}^{2}$, which can not be orthogonal in any way. Further, discarding Bob's first (second) qubit the states $\ket{\xi_{3}}$ and $\ket{\xi_{4}}$ will produce a uniform ensemble of $\ket{001}$ and $\ket{110}$ ($\ket{00-}$ and $\ket{11+}$). Similarly, discarding any of the Alice's qubit will produce $\frac{1}{2}(\ketbra{0}{0}\otimes\ketbra{-}{-}\otimes\ketbra{1}{1}+\ketbra{1}{1}\otimes\ketbra{+}{+}\otimes\ketbra{0}{0})$ both from $\ket{\xi_{3}}$ and $\ket{\xi_{4}}$.
\end{proof}
\begin{theorem}\label{th:b2}
The set $\mathcal{S}_4$ can be deterministically converted to a locally unmarkable set of states via local OPMs.
\end{theorem}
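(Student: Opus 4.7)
The plan is to mimic the strategy of Theorem~\ref{th:b1}, exploiting the block structure of $\mathcal{S}_4$: the eight states fall into two quadruples indexed by Alice's parity sector, the ``even'' block $\{\ket{\xi_1},\ldots,\ket{\xi_4}\}$ supported on $\mathrm{span}\{\ket{\mathbf{0}}_A,\ket{\mathbf{3}}_A\}$ and the ``odd'' block $\{\ket{\xi_5},\ldots,\ket{\xi_8}\}$ supported on $\mathrm{span}\{\ket{\mathbf{1}}_A,\ket{\mathbf{2}}_A\}$. Up to a relabelling of Alice's basis vectors each block is exactly $\mathcal{S}_1$. I would therefore have Bob apply the same binary OPM $\mathcal{N}_B=\{P[(\ket{\mathbf{0}},\ket{\mathbf{1}})_B],\,P[(\ket{\mathbf{2}},\ket{\mathbf{3}})_B]\}$ used in Theorem~\ref{th:b1}, and then argue that either outcome realises a deterministic conversion to a locally unmarkable set.

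The first step is a routine computation of the post-measurement states. Projecting each $\ket{\xi_i}$ with $N_1^B$ yields eight pairwise orthogonal states in $\mathbb{C}^4_A\otimes\mathbb{C}^2_B$, with the $B$ factor the surviving two-dimensional subspace $\mathrm{span}\{\ket{\mathbf{0}}_B,\ket{\mathbf{1}}_B\}$. Four of them, coming from $\ket{\xi_1},\ldots,\ket{\xi_4}$, live on the subspace spanned by $\{\ket{\mathbf{0}}_A,\ket{\mathbf{3}}_A\}\otimes\{\ket{\mathbf{0}}_B,\ket{\mathbf{1}}_B\}$ and are exactly the four Bell states of this effective qubit-qubit system, while those from $\ket{\xi_5},\ldots,\ket{\xi_8}$ are the four Bell states of the subspace spanned by $\{\ket{\mathbf{1}}_A,\ket{\mathbf{2}}_A\}\otimes\{\ket{\mathbf{0}}_B,\ket{\mathbf{1}}_B\}$. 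When $N_2^B$ clicks, a local bit-flip on Bob's first qubit aligns the two branches, so the final ensemble has the same ``two-block Bell basis'' structure independent of the outcome.

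The second step is to show that this eight-state set is locally unmarkable, by reduction to the known unmarkability of the four Bell states~\cite{lsm}. Suppose some LOCC protocol $\mathcal{P}$ assigned a correct label in $\{1,\ldots,8\}$ to each post-measurement state with certainty. Restricted to the even-parity quadruple, $\mathcal{P}$ would deterministically mark the four Bell states supported on the two-dimensional Alice-subspace $\mathrm{span}\{\ket{\mathbf{0}}_A,\ket{\mathbf{3}}_A\}$. One can then simulate $\mathcal{P}$ starting from the genuine four Bell states in $\mathbb{C}^2_A\otimes\mathbb{C}^2_B$: Alice adjoins a local ancilla and applies the local isometry $\ket{0}\mapsto\ket{\mathbf{0}}$, $\ket{1}\mapsto\ket{\mathbf{3}}$, after which the parties execute $\mathcal{P}$, producing a deterministic LOCC marking of the four Bell states---contradicting~\cite{lsm}.

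The main obstacle is precisely this reduction: one must verify that every LOCC branch of $\mathcal{P}$, when the input is promised to lie in Alice's even-parity two-dimensional subspace, can be faithfully simulated by an LOCC protocol augmented by a local ancilla on the qubit-qubit embedding, i.e.\ that no measurement outcome of $\mathcal{P}$ covertly exploits the ``unused'' sector of Alice's ququad to extract extra information. A careful bookkeeping of Alice's Kraus operators---showing that their action on inputs confined to the fixed local subspace is reproducible by LOCC with local ancillas---closes this gap, after which the appeal to~\cite{lsm} finishes the proof.
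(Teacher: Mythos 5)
Your route is genuinely different from the paper's, and the difference matters. The paper has \emph{both} parties measure: Bob applies $\mathcal{N}_B\equiv\{P[(\ket{\mathbf{0}},\ket{\mathbf{1}})_B],P[(\ket{\mathbf{2}},\ket{\mathbf{3}})_B]\}$ \emph{and} Alice applies $\mathcal{R}_A\equiv\{P[(\ket{\mathbf{0}},\ket{\mathbf{3}})_A],P[(\ket{\mathbf{1}},\ket{\mathbf{2}})_A]\}$. Alice's outcome reveals the block, so every branch terminates in exactly the four two-qubit Bell states (cardinality $8\to 4$), and local unmarkability is then a direct citation of \cite{lsm} with nothing left to prove. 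By having only Bob measure, you land on an eight-state set consisting of two Bell blocks on orthogonal local subspaces of Alice, and you must then establish a \emph{new} claim -- that this eight-state set is locally unmarkable -- which is not in the literature. (Incidentally, the obstacle you flag at the end is not the real one: since the embedding is a fixed local isometry $V_A\otimes V_B$, the parties simply append ancillas, apply $V_A,V_B$, and run the hypothesised protocol as a black box on states that happen to lie in a subspace; no bookkeeping of its Kraus operators is needed.)

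The genuine gap is in the restriction step itself, and it hinges on what ``locally unmarkable'' quantifies over. Local state marking in \cite{lsm} is the $N^{(n)}$-LSM task: $n$ systems, each prepared in a \emph{distinct} member of the $N$-element set, all to be labelled correctly. If the unmarkability of your eight-state set is witnessed by some $n\le 4$, your argument goes through: an instance supported entirely on the even block is a legitimate instance of the eight-state problem, a perfect protocol must succeed on it, and the local isometry pulls this back to perfect $4^{(n)}$-LSM of the Bell basis, contradicting \cite{lsm}. But if one reads unmarkability as failure of the full $8^{(8)}$-LSM task, the restriction is vacuous -- every valid instance necessarily contains all four odd-block states -- and the reduction in the other direction fails because those odd-block states are entangled and cannot be adjoined by LOCC when simulating from a genuine four-Bell-state instance. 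You therefore need to fix the value of $n$ for which you claim unmarkability and check it matches the impossibility results actually proved in \cite{lsm}; the paper's proof sidesteps all of this by arranging for the activated set to be literally the four Bell states. Note also that the reduced cardinality of the paper's final set is used later in the Discussion, so your variant, even once repaired, establishes a slightly different instance of the statement.
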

\begin{proof}
In order to activate the nonlocality of the orthogonal set $\mathcal{S}_4$, Bob performs a local OPM, $\mathcal{N}_B\equiv\{{N}_1^B:=P[(\ket{\mathbf{0}},\ket{\mathbf{1}})_B],{N}_2^i:=P[(\ket{\mathbf{2}},\ket{\mathbf{3}})_B]\}$. On the other side, Alice measures, $\mathcal{R}_A\equiv\{{R}_1^A:=P[(\ket{\mathbf{0}},\ket{\mathbf{3}})_A],{R}_2^A:=P[(\ket{\mathbf{1}},\ket{\mathbf{2}})_A]\}$. For all possible outcomes of these two measurements, it is quite evident that the players are left with any of the four two-qubit Bell states which is known to be locally unmarkable under general LOCC \cite{lsm}. This completes our proof. 
\end{proof}
Note that, Theorem \ref{th:b1} and \ref{th:b2} prove that there exists such orthogonal sets from which a stronger form of quantum nonlocality - local unmarkability can be genuinely activated. However, as we mentioned earlier there are other stronger nonlocal features, such as, local irreducibility \cite{halderprl}. The authors of \cite{sar} dealt with a number of examples where the ultimate activated nonlocal (\textit{i.e.}, locally indistinguishable) sets have cardinality three. However, such sets are trivially irreducible under LOCC \cite{halderprl}. In our case, Theorem \ref{th:b1} and \ref{th:b2} provide examples where the activated sets consist of four two-qubit Bell states. Interestingly, these four states are known to be nontrivial example of locally irreducible set \cite{halderprl}. Our example, therefore, not only activates local unmarkability but also the local irreducibility. However, in the above mentioned cases we dealt with bipartite examples only. A pertinent question in this direction would be to activate stronger nonlocal features in multiparty scenario. The next section delves into this particular question. 

\section{activation of local irreducibility in genuine entanglement}\label{sec:ghz}

Let us consider a set of N-partite states $\mathcal{{S}}_{5}^{(N)}:=\{\ket{\eta_{k}(\pm)}\}_{k=0}^{\alpha_{N}}\in\mathbb{C}^{4}\otimes\mathbb{C}^{2^{\otimes(N-1)}}$, such that
\begin{equation}\label{eq5}
  \ket{\eta_{k}(\pm)}=\ket{\mathbf{0},k}\pm\ket{\mathbf{1}, (\alpha_{N}-k)}\pm[\ket{\mathbf{2},k}\mp\ket{\mathbf{3}, (\alpha_{N}-k)}]  
\end{equation}
    where, $\alpha_{p}:=2^{(p-1)}-1$, $k$ is the decimal equivalent of the corresponding $(N-1)$-bit string and $\{\mathbf{0},\mathbf{1},\mathbf{2},\mathbf{3}\}$ represents the composition of two qubits $\{00,01,10,11\}$ respectively. However, for better clarity please see the \textit{Appendix} \ref{ap1} for $N=3$ case.  In the following, we will propose the salient features of this set from the perspective of local discrimination.
\begin{proposition}\label{prop4}
The set $\mathcal{S}_{5}^{(N)}$ is distinguishable under LOCC and free from local redundancy.
\end{proposition}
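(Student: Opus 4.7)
The plan is to handle the two parts of Proposition~\ref{prop4} separately. For local distinguishability, I would first rewrite each state by grouping the four branches on party 1's ququad $A_{1}A_{2}$,
\begin{align*}
\ket{\eta_{k}(+)} &= (\ket{\mathbf{0}}+\ket{\mathbf{2}})\ket{k}_{R}+(\ket{\mathbf{1}}-\ket{\mathbf{3}})\ket{\bar{k}}_{R},\\
\ket{\eta_{k}(-)} &= (\ket{\mathbf{0}}-\ket{\mathbf{2}})\ket{k}_{R}-(\ket{\mathbf{1}}+\ket{\mathbf{3}})\ket{\bar{k}}_{R},
\end{align*}
where $R$ is the $(N-1)$-qubit register on parties $2,\dots,N$ and $\bar{k}:=\alpha_{N}-k$ is the bitwise complement of $k$ as an $(N-1)$-bit string. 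The four vectors $\{\ket{\mathbf{0}}\pm\ket{\mathbf{2}},\ket{\mathbf{1}}\pm\ket{\mathbf{3}}\}$ form an orthonormal basis of $\mathbb{C}^{4}$ that factorises as a Hadamard basis on $A_{1}$ tensored with the computational basis on $A_{2}$, so party 1 implements it with local single-qubit operations. By inspection each of the four outcomes simultaneously fixes the sign label $\pm$ and projects $R$ onto either $\ket{k}$ or $\ket{\bar{k}}$. Parties $2,\dots,N$ then measure their qubits in the computational basis and broadcast the outcomes; the resulting $(N-1)$-bit string is $k$ or $\bar{k}$, and since complementation is a bijection this determines $k$ uniquely. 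Combined with party 1's outcome the state is identified, which directly generalises the two-party protocol of Proposition~\ref{4s}.

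For freedom from local redundancy I need to verify that discarding any collection of the $N+1$ elementary qubits makes the set either non-orthogonal or still locally distinguishable. It suffices to check single-qubit discards, since throwing away further subsystems only coarse-grains and so inherits any coincidence of reduced states already produced. I would split into three cases. \emph{Discarding $A_{1}$:} from $\ket{\eta_{k}(\pm)}=\ket{0}_{A_{1}}[\ket{0,k}\pm\ket{1,\bar{k}}]\pm\ket{1}_{A_{1}}[\ket{0,k}\mp\ket{1,\bar{k}}]$ the partial trace yields $\tfrac{1}{2}(\ketbra{0,k}{0,k}+\ketbra{1,\bar{k}}{1,\bar{k}})$ for both signs, so $\ket{\eta_{k}(+)}$ and $\ket{\eta_{k}(-)}$ collapse to the same mixed state. \emph{Discarding $A_{2}$:} an analogous computation gives $\tfrac{1}{2}(\ketbra{+,k}{+,k}+\ketbra{-,\bar{k}}{-,\bar{k}})$ for the plus sign and the same expression with $\ket{+}\!\leftrightarrow\!\ket{-}$ for the minus sign, so $\ket{\eta_{k}(+)}$ matches $\ket{\eta_{\bar{k}}(-)}$ after using $\bar{\bar{k}}=k$. \emph{Discarding $B_{j}$ for some $j\ge 2$:} the reduced density matrix depends on $k$ only through the $N-2$ bits other than $k_{j-1}$, so $\ket{\eta_{k}(\pm)}$ and $\ket{\eta_{k\oplus 2^{j-1}}(\pm)}$ coincide.

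In each of the three cases two distinct members of $\mathcal{S}_{5}^{(N)}$ collapse to the same mixed state, so the reduced ensemble is non-orthogonal and a fortiori not locally distinguishable. The arithmetic reduces to a short inner-product check exploiting the orthogonality of the four ququad basis vectors; the only mildly delicate step I anticipate is the $B_{j}$ case, where one must track carefully how the bitwise complement $\bar{k}$ behaves when a single bit of $k$ is traced out. The $N=3$ example worked out in Appendix~\ref{ap1} should serve as an explicit template.
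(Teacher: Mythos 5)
Your proof is correct and follows essentially the same route as the paper's: the same local protocol (the first party's four-outcome measurement in the basis $\{\ket{\mathbf{0}}\pm\ket{\mathbf{2}},\ket{\mathbf{1}}\pm\ket{\mathbf{3}}\}$ together with computational-basis measurements by the remaining parties, merely applied in the opposite order), and the same coinciding pairs of reduced states for each single-qubit discard ($\{\ket{\eta_{k}(+)},\ket{\eta_{k}(-)}\}$, $\{\ket{\eta_{k}(\pm)},\ket{\eta_{\alpha_{N}-k}(\mp)}\}$, and bit-flip pairs for the other parties). Your only addition is the explicit observation that checking single-qubit discards suffices, which the paper leaves implicit.
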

\begin{proof}
To discriminate this ensemble, all but the first party will measure their respective qubits in  $\mathcal{M}_i\equiv\{{M}_1^i:=P[\ket{0}_i],{M}_2^i:=P[\ket{1}_i]\}$ and via classical communication they can identify the decimal index $k$ and correspondingly the state will be either $\ket{\eta_{k}(\pm)}$ or, $\ket{\eta_{\alpha_{N}-k}(\pm)}$. Now, performing a four-outcome measurement  $\mathcal{N}_A\equiv\{N_1^A:=P[\ket{\mathbf{0+2}}_A],N_2^A:=P[\ket{\mathbf{0-2}}_A],N_3^A:=P[\ket{\mathbf{1+3}}_A],N_4^A:=P[\ket{\mathbf{1-3}}_A]\}$ on the first party's possession, they can discriminate the state perfectly. Precisely,  the outcomes of the first party's measurement correspond to the states $\ket{\eta_{k}(\pm)}$ and $\ket{\eta_{\alpha_{N}-k}(\mp)}$ respectively, for every values of $k$.

Now, we will argue that the set of states is free from local redundancy. Note that, if the first (second) qubit of the first party (personified as Alice) is discarded then the reduced density matrix for each pair of states $\ket{\eta_{k}(\pm)}$ ($\{\ket{\eta_{k}(\pm)},\ket{\eta_{\alpha_{N}-k}(\mp)}\}$) will be identical. Further, if the second party discarded their qubit then the reduced system for each of the pairs $\{\ket{\eta_{k}(\pm)},\ket{\eta_{k+\alpha_{N-1}+1}(\pm)}\}$ will be identical and similar argument runs for all the $(N-1)$ parties, due to the party-symmetric nature of these states.  

This completes the proof.
\end{proof}
Our next theorem will depict the possibility for stronger form of nonlocality activation from the set $\mathcal{S}_{5}^{(N)}$.

\begin{theorem}\label{theo5}
The set $\mathcal{S}_{5}^{(N)}$ can be deterministically converted to a class of locally irreducible $N$-partite genuinely entangled states, which are even indistinguishable when all but one party come together.
\end{theorem}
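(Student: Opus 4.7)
The plan is to have Alice perform a computational-basis OPM on her first qubit $a_1$, namely $\mathcal{N}_A=\{P[(\ket{\mathbf 0},\ket{\mathbf 1})_A],\,P[(\ket{\mathbf 2},\ket{\mathbf 3})_A]\}$. Reading off \eqref{eq5}, the outcome $P[(\ket{\mathbf 0},\ket{\mathbf 1})_A]$ sends $\ket{\eta_k(\pm)}$ to $\ket{0}_{a_1}\otimes[\ket{0}_{a_2}\ket{k}\pm\ket{1}_{a_2}\ket{\bar k}]$, while the other outcome yields $\ket{1}_{a_1}\otimes[\ket{0}_{a_2}\ket{k}\mp\ket{1}_{a_2}\ket{\bar k}]$, where $\bar k:=\alpha_N-k$ is the bit-wise complement of $k$ as an $(N-1)$-bit string. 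Applying a corrective Pauli-$Z$ on $a_2$ in the second branch aligns the two outcomes, so the OPM deterministically produces the $N$-qubit generalised GHZ basis $\mathcal{G}_N:=\{\tfrac{1}{\sqrt 2}(\ket{0,k}\pm\ket{1,\bar k})\}_{k=0}^{\alpha_N}$ on $(a_2,b_1,\ldots,b_{N-1})$, a complete orthonormal basis of $2^N$ states.

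I would next certify that $\mathcal{G}_N$ is genuinely $N$-partite entangled and locally irreducible. Since $k\mapsto\bar k$ is the bit-wise complement, every state in $\mathcal{G}_N$ is a GHZ-type state of Schmidt rank $2$ across every non-trivial bipartition of the $N$ qubits, hence genuinely multipartite entangled. For local irreducibility, any non-trivial qubit measurement by some party must contain a rank-$1$ projector $P[\cos\theta\ket 0+e^{i\phi}\sin\theta\ket 1]$; a direct computation of its action on the sub-ensemble $\{\tfrac{1}{\sqrt 2}(\ket{0,k}\pm\ket{1,\bar k}),\tfrac{1}{\sqrt 2}(\ket{0,\bar k}\pm\ket{1,k})\}$ (with $\bar k$ replaced by $k\oplus e_i$ when the $i$-th of the $b$-qubits is measured) shows that intra-pair orthogonality forces $\theta=\pi/4$, the swap-pair orthogonality then forces $\cos\phi=0$, and the opposite-sign swap-pair orthogonality forces $\sin\phi=0$. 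These constraints have no simultaneous solution, so no non-trivial rank-$1$ OPM survives on any party and $\mathcal{G}_N$ is locally irreducible.

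For the indistinguishability claim across the $1|N-1$ cut, I would merge the $N-1$ parties $b_1,\ldots,b_{N-1}$ into a single party $B$ of dimension $2^{N-1}$ and reduce the problem to the Bell basis. For any index $k$, the four states $\{\tfrac{1}{\sqrt 2}(\ket{0,k}\pm\ket{1,\bar k}),\tfrac{1}{\sqrt 2}(\ket{0,\bar k}\pm\ket{1,k})\}$ lie inside the local tensor-product subspace $\mathbb{C}^2\otimes\mathrm{span}\{\ket{k},\ket{\bar k}\}$, and the local isometry $\ket{k}\!\mapsto\!\ket 0,\,\ket{\bar k}\!\mapsto\!\ket 1$ on $B$ converts them exactly into the four Bell states on $\mathbb{C}^2\otimes\mathbb{C}^2$. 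Since any LOCC protocol discriminating the full $\mathcal{G}_N$ automatically discriminates each of its $4$-element sub-ensembles, composing with the $B$-side isometry would yield an LOCC discriminator for the four Bell states, contradicting their well-known LOCC-indistinguishability \cite{walgate,lid18}. Hence $\mathcal{G}_N$ remains LOCC-indistinguishable even when all but one party collaborate.

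I expect the main obstacle to be the local-irreducibility step. The dimensional heuristic ``$2^N$ orthogonal vectors cannot all survive inside a $2^{N-1}$-dimensional subspace'' correctly identifies \emph{where} the obstruction sits, but turning it into a rigorous claim requires the continuous $(\theta,\phi)$ sweep above together with a case analysis over the measuring party, and one must exhibit the incompatibility between the three constraints arising from the intra-pair and the two swap-pair orthogonality relations. Once this bookkeeping is done, the remaining items---genuine multipartite entanglement and the Bell-subset reduction for the $(1|N-1)$ indistinguishability---follow in a straightforward way from the uniform pair structure $(k,\bar k)$.
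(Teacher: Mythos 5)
Your construction is the same as the paper's: the OPM $\{P[(\ket{\mathbf 0},\ket{\mathbf 1})_A],P[(\ket{\mathbf 2},\ket{\mathbf 3})_A]\}$ (i.e.\ a computational measurement on $a_1$), followed by a phase correction on the second branch, landing on the $N$-qubit GHZ basis; your bookkeeping of signs and of $\bar k=\alpha_N-k$ as the bitwise complement is correct. Where you genuinely diverge is the $1{:}(N-1)$ indistinguishability claim. The paper observes that all $2^N$ states are related by local unitaries $\mathbb{U}_k^{\pm}$ on the merged $(N-1)$-party side and then invokes a quantitative bound from \cite{lid8} to cap the success probability at $1/2$. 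You instead restrict to the four-element sub-ensemble $\{(\ket{0,k}\pm\ket{1,\bar k})/\sqrt2,(\ket{0,\bar k}\pm\ket{1,k})/\sqrt2\}$, which a local isometry on the merged side carries onto the four Bell states, and conclude by their known LOCC-indistinguishability. This reduction is valid (a discriminator for the full set discriminates every sub-ensemble, and the isometry is a free local operation for the merged party), and it is more elementary and self-contained than the paper's route, though it only yields impossibility of perfect discrimination rather than the explicit $1/2$ bound the paper states.

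The one genuine gap is in your local-irreducibility step, which the paper sidesteps entirely by citing \cite{halderprl}. Your premise that ``any non-trivial qubit measurement by some party must contain a rank-$1$ projector $P[\cos\theta\ket0+e^{i\phi}\sin\theta\ket1]$'' is false: an orthogonality-preserving local measurement is a POVM, and a nontrivial qubit POVM can consist entirely of full-rank elements (e.g.\ $\{E,\mathbb{I}-E\}$ with $E=\mathrm{diag}(0.7,0.3)$), so your $(\theta,\phi)$ sweep does not exhaust the cases you must exclude. The repair is standard and uses exactly the orthogonality relations you already identified: take a general positive element $E$ on the measured qubit and impose $\bra{\phi_i}(E\otimes\mathbb{I})\ket{\phi_j}=0$ for $i\neq j$; the intra-pair condition gives $E_{00}=E_{11}$ and the two swap-pair conditions give $E_{01}\pm E_{10}=0$, hence $E\propto\mathbb{I}$ for every party, which is the triviality statement that actually establishes local irreducibility. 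Either carry out that matrix-element computation or, as the paper does, cite the known irreducibility of the GHZ basis; as written, your argument does not close this step.
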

\begin{proof}
To activate the stronger nonlocality, i.e., local irreduicibility Alice will perform a measurement $\mathcal{N}_{A}\equiv\{N_{1}^{A}:= P[(\ket{\mathbf{0}},\ket{\mathbf{1}})_{A}], N_{2}^{A}:= P[(\ket{\mathbf{2}},\ket{\mathbf{3}})_{A}]\}$ on her possession. For each of her clicks the set $\mathcal{S}_{5}^{(N)}$ will be converted to a set of genuinely entangled states in $\mathbb{C}^{2^{\otimes N}}$. These states can be represented as,
\begin{equation}
\ket{\phi_{k}(\pm)}= \ket{\textbf{p},k}\pm\ket{\textbf{q},(\alpha_{N}-k)}
\end{equation}
where, $\ket{\mathbf{p}}=\ket{\mathbf{0}}~(\ket{\mathbf{2}})$ and $\ket{\mathbf{q}}=\ket{\mathbf{1}}~(-\ket{\mathbf{3}})$ when the projector $N_{1}^{A}~(N_{2}^{A})$ clicks.

Clearly, these states are $N$-partite genuinely entangled GHZ states, upto a local unitary ($\ket{\mathbf{2}}\to\ket{\mathbf{0}};~-\ket{\mathbf{3}}\to\ket{\mathbf{1}}$) for the click $N_{2}^{A}$. Interestingly, these states are locally irreducible \cite{halderprl}.
Now, we will show that these states are locally indistinguishable even if all but one parties collaborate. 

It is easy to see that any $N$-qubit GHZ state, in any $1:(N-1)$ bipartition, can be written as
\begin{equation}\label{eq7}
\ket{\mathcal{G}_{k}(\pm)}= \ket{0}\ket{k}\pm\ket{1}\ket{(\alpha_{N}-k)},
\end{equation}
where, $k\in\{0,\cdots, \alpha_{N}\}$ represents the decimal equivalent of $(N-1)$-bits. Also note that all these $(N-1)$-partite states $\{\ket{k}\}$ (and hence $\{\ket{(\alpha_{N}-k)}\}$) are mutually orthogonal.

Now, it is evident that when all the $(N-1)$-parties come together, they can construct the unitary $\mathbb{U}_{k}^{\pm}$, which takes $\ket{0}^{\otimes(N-1)}\to\ket{k}$ and $\ket{1}^{\otimes(N-1)}\to\pm\ket{(\alpha_{N}-k)}$. Therefore, one can write 
\begin{equation}\label{eq8}
    \ket{\mathcal{G}_{k}(\pm)}=(\mathbb{I}\otimes\mathbb{U}_{k}^{\pm})\ket{\mathcal{G}_{0}(+)}
\end{equation}
 Further, by noting that the states in Eq. (\ref{eq7}) are entangled in $\mathbb{C}^{2}\otimes\mathbb{C}^{2^{\otimes(N-1)}}$ and using Eq. (\ref{eq8}), we can assure that these states can only be distinguished under LOCC with a probability not more than $\frac{1}{2}$ \cite{lid8}. This proves that the states are indistinguishable even in one \textit{vs.} $(N-1)$ bipartition.
\end{proof}

\section{Discussion}\label{sec:dis}
In summary, we have explored the stronger notion of nonlocality activation, starting from the sets of entangled states, perfectly discriminable under LOCC. In a more precise note, we have constructed explicit examples of locally distinguishable set of states in $\mathbb{C}^{2}\otimes\mathbb{C}^{4}$ and $\mathbb{C}^{4}\otimes\mathbb{C}^{4}$ which can be transformed to a set of locally unmarkable, and hence obviously locally indistinguishable, states under the action of orthogonality preserving local measurement. These two different examples suggest that the activation is possible both with identical, as well as reduced cardinality than that of the initial set. Further, to emphasize the difference between the activation of unmarkability and indistinguishability, we have dealt with two other locally distinguishable sets in $\mathbb{C}^{4}\otimes\mathbb{C}^{8}$ and $\mathbb{C}^{2}\otimes\mathbb{C}^{4}$, among which the first one can be transformed to a locally indistinguishable but markable set, while only one-way local unmarkability can be activated from the latter. We have also shown that it is possible to activate another stronger notion of genuine indistinguishability, i.e., local irreduiciblity, from a locally distinguishable set in $\mathbb{C}^{4}\otimes\mathbb{C}^{2^{\otimes (N-1)}}.$

Besides exploring a vivid range of quantum nonlocality from the perspective of state discrimination, our work opens up a number of new directions for future research. Although, the activation of indistinguishability and irreduicibility for the product states have been reported very recently \cite{liz, our}, the notion of unmarkability activation has not been studied yet. Further, it will be interesting to study the possibilities of all such activation for the non-maximally entangled states, which, in turn, may answer towards the activation of strongest possible state indistinguishability with two copies under adaptive LOCC \cite{Banik2021prl}. Another important direction is to consider the possibilities of nonlocality activation under the SEP \cite{Sep1,lid18,Sep2,Sep3} and PPT preserving operations \cite{Ppt1}, which are known to be broader than the set of LOCC operations.

\section{Acknowledgement}
We would like to acknowledge stimulating discussions with Guruprasad Kar, Manik Banik and Saronath Halder. Tamal Guha would like to acknowledge his academic visit at Indian Statistical Institute, Kolkata, during November-December of 2021.

\bibliography{SR}

\onecolumngrid 
\appendix
\onecolumngrid 
\section{Activation of Local Irreducibility: An Example with the Set of Three-Qubit GHZ Basis} \label{ap1}
To get a simple realization of Proposition \ref{prop4} and Theorem \ref{theo5}, let us consider the $N=3$ case. In that case the set of \textit{eight} states $\mathcal{S}_{5}^{(3)}\in\mathbb{C}^{4}\otimes\mathbb{C}^{2^{\otimes2}}$, as depicted in Eq. (\ref{eq5}), will take the form

\begin{eqnarray}\nonumber
\ket{\eta_0(\pm)}&=&\ket{\mathbf{0}0}\pm\ket{\mathbf{1}3}\pm\left(\ket{\mathbf{2}0}\mp\ket{\mathbf{3}3}\right)\\\nonumber
\ket{\eta_1(\pm)}&=&\ket{\mathbf{0}1}\pm\ket{\mathbf{1}2}\pm\left(\ket{\mathbf{2}1}\mp\ket{\mathbf{3}2}\right)\\\nonumber
\ket{\eta_2(\pm)}&=&\ket{\mathbf{0}2}\pm\ket{\mathbf{1}1}\pm\left(\ket{\mathbf{2}2}\mp\ket{\mathbf{3}1}\right)\\\nonumber
\ket{\eta_3(\pm)}&=&\ket{\mathbf{0}3}\pm\ket{\mathbf{1}0}\pm\left(\ket{\mathbf{2}3}\mp\ket{\mathbf{3}0}\right)
\end{eqnarray}
where, each of the states $\{\ket{0},\ket{1},\ket{2},\ket{3}\}$ are the decimal equivalents of two qubits ($b$ and $c$)possessed by the second and third parties (personified as Bob and Charlie respectively), which is revealed by measuring their individual qubits in the computational basis $\{\ket{0},\ket{1}\}$ and communicating the outcomes. Also note that the states $\{\ket{\mathbf{0}},\ket{\mathbf{1}},\ket{\mathbf{2}},\ket{\mathbf{3}}\}$ denotes two composed qubits ($a_{1}$ and $a_{2}$) of the first party (say Alice), which is further equivalent to a qu-quad system. 

Now, if Alice performs the measurement $\mathcal{N}_A\equiv\{N_1^A:=P[\ket{\mathbf{0+2}}_A],N_2^A:=P[\ket{\mathbf{0-2}}_A],N_3^A:=P[\ket{\mathbf{1+3}}_A],N_4^A:=P[\ket{\mathbf{1-3}}_A]\}$ on her qu-quad systems, she can easily distinguish the shared state depending upon the outcomes obtained by Bob and Charlie (see Table \ref{tab1}).

Further, by tracing out Alice's first qubit ($a_{1}$) both the states $\ket{\eta_{0}(\pm)}$ will be reduced to $\frac{1}{2}(\ketbra{000+111}{000+111}_{a_{2}bc}+\ketbra{000-111}{000-111}_{a_{2}bc})$, while tracing out her second qubit ($a_{2}$) produces $\frac{1}{2}(\ketbra{+}{+}_{a_{1}}\otimes\ketbra{00}{00}_{bc}+\ketbra{-}{-}_{a_{2}}\otimes\ketbra{11}{11}_{bc})$ both from $\ket{\eta_{0}(+)}$ and $\ket{\eta_{3}(-)}$. This guarantees that the set $\mathcal{S}_{5}^{(3)}$ is free from local redundancy.

Now, we are at the position to obtain tripartite GHZ basis in $\mathbb{C}^{2^{\otimes3}}$ from the set $\mathcal{S}_{5}^{(3)}$ under OPLM. If Alice performs the measurement $\mathcal{K}_A\equiv\{K_1^A:=P[(\ket{\mathbf{0}},\ket{\mathbf{1}})_A],K_2^A:=P[(\ket{\mathbf{2}},\ket{\mathbf{3}})_A]\}$
on her qu-quad system, then for the click $K_{1}^{A}$ the states will be converted to
\begin{equation*}
    \ket{\eta_{k}(\pm)}\to\ket{\phi_{k}(\pm)}=(\ket{\mathbf{0},k}\pm\ket{\mathbf{1},3-k})_{a_{1}a_{2}BC},~k\in\{0,1,2,3\}
\end{equation*}
and similarly for the click $K_{2}^{A}$, they will become
\begin{equation*}
    \ket{\eta_{k}(\pm)}\to\ket{\phi_{k}^{'}(\pm)}=(\ket{\mathbf{2},k}\mp\ket{\mathbf{3},3-k})_{a_{1}a_{2}BC},~k\in\{0,1,2,3\}.
\end{equation*}
Further, applying the unitary $\mathbb{U}_{3}=(\ketbra{2}{0}+\ketbra{3}{1}+\ketbra{0}{2}-\ketbra{1}{3})$ Alice can convert the states again in the standard GHZ-basis, which is known to be locally irreducible and indistingushable in any bipartition.
\begin{table}[h]
\centering
\begin{tabular}{ c|c|c|c|c|c }
&&\multicolumn{4}{c}{Bob-Charlie}\\
\hline
&& 00 & 01 & 10 & 11 \\
\hline
\multirow{4}{3em}{Alice} & $N_{1}^{A}$ & $\ket{\eta_{0}(+)}$ & $\ket{\eta_{1}(+)}$ & $\ket{\eta_{2}(+)}$ & $\ket{\eta_{3}(+)}$\\
 & $N_{2}^{A}$ & $\ket{\eta_{0}(-)}$ & $\ket{\eta_{1}(-)}$ & $\ket{\eta_{2}(-)}$ & $\ket{\eta_{3}(-)}$\\ 
 & $N_{3}^{A}$ & $\ket{\eta_{3}(-)}$ & $\ket{\eta_{2}(-)}$ & $\ket{\eta_{1}(-)}$ & $\ket{\eta_{0}(-)}$\\ 
  & $N_{4}^{A}$ & $\ket{\eta_{3}(+)}$ & $\ket{\eta_{2}(+)}$ & $\ket{\eta_{1}(+)}$ & $\ket{\eta_{0}(+)}$ \\ 
\hline
\end{tabular}
\caption{The table depicts LOCC distinguishability of the set $\mathcal{S}_{5}^{(3)}$}
\label{tab1}
\end{table}
\end{document}